\documentclass[a4paper, 12pt]{article}

\usepackage{amssymb}
\usepackage{mathrsfs}
\usepackage{amsfonts}
\usepackage{latexsym,amsthm,amsmath}

\newtheorem{thm}{Theorem}[section]

\newtheorem{rem}[thm]{Remark}

\title{\bf Landau operator on the quaternionic field}
\date{}
\author {
Azzouz Zinoun  \footnote {Laboratoire Phlam, UMR-CNRS 8523, UFR de
Physique Universit\'e Lille I --- Sciences et Technologies, 59655
Villeneuve d'Ascq Cedex, France
\newline
 E-mail: Azzouz.Zinoun@univ-lille1.fr}
, Dominique Kazmierowski \footnote{ Universit\'e Lille I ---
Sciences et Technologies, 59655 Villeneuve d'Ascq Cedex, France
\newline
 E-mail: dominique.kazmierowski@gmail.com}
\\and Ahmed Intissar \footnote{D\'epartement de Math\'ematiques
 Facult\'{e} des Sciences de Rabat BP 1014, Morocco
 \newline
  E-mail: intissar@fsr.ac.ma }}
%
%
\begin{document}
\maketitle
\begin{abstract}
\noindent The Landau operator on the quaternionic field is a
differential operator $\mathcal{H}_ {\vec{B}}$ acting on $C^\infty
\left({\mathbb H;\mathbb C} \right)$, defined as the Fourier
transform of the sub-Laplacian associated to the quaternionic
Heisenberg group $\Im m\mathbb H\times_\omega\mathbb H$, where
$\mathbb H \sim\mathbb{R}^4$ is viewed as the space of quaternions,
$\omega$ is a canonical $\Im m\mathbb H$-valued symplectic form on
$\mathbb H $, $\Im m\mathbb H\sim\mathbb{R}^3$ and $ \vec{B} \in \Im
m\mathbb H$ is fixed. $\mathcal{H}_ {\vec{B}}$ is the Hamiltonian of
a charged particle in $ \mathbb H$, interacting with a uniform
magnetic field $\vec {B}$. By a suitable orthogonal change of basis
in $\mathbb H$, $\mathcal{H}_ {\vec{B}}$ is transformed into an
other Landau  operator $\mathcal{H}_{\| \vec{B}\|}$ which is much
simpler, $\| \vec{B}\|$ is the norm of the magnetic field $\vec{B}$.
This new operator is the Hamiltonian of two superposed uncoupled
complex harmonic oscillators.
\end{abstract}
\section{Introduction}
Landau operator arises in both physics and mathematics, and appears
in the framework of the study of charged particles acted on by
magnetic fields. The aim of this paper is to present from a
mathematical point of view an operator analogue to the well known
quantum mechanical Hamiltonian of a charged particle moving in a
uniform magnetic field $\cite{LL}$; this operator will be denoted by
$\mathcal{H}_{\vec B }$. $\mathcal{H}_{\vec B}$ acts on
$C^\infty(\mathbb {H},\mathbb {C})$, where $\mathbb H$ is the
quaternion space and $\mathbb {C}$ is the complex field and
$\vec{B}$ is a uniform magnetic field. In Section 2, we give a brief
review of the Schr\"{o}dinger equation. In Section 3, we construct
the Heisenberg group and its Heisenberg algebra from which we derive
the Landau operator. In Section 4, we show that the Landau operator
$\mathcal{H}_{\vec B }$ is defined as the Fourier transform of the
sub-Laplacian associated to the quaternionic Heisenberg group of
dimension 7; we give the expressions of the Landau operator both in
real and complex form; we define also the angular momentum operator
on $\mathbb H \sim \mathbb{R}^4$. In section 5, we state a theorem
which allows us to give a canonical expression $\mathcal{H}_{\|
\vec{B}\|}$ of $\mathcal{H}_{\vec B }$, which is more simpler to
study. In Section 6, the symmetry group and the gauge invariance of
the Landau operator are discussed. Section 7 relates this work with
studies by other people on a quantum system in a uniform magnetic
field.
\section{Physical motivation and a brief review of Schr\"{o}dinger equation}
The standard procedure for extending the equations of motion of
classical mechanics to the quantum Schr\"{o}dinger equation (in
position space) is through a generalization of the Hamiltonian
formulation of classical mechanics. For the free particle in one
dimension this consist of simply making the identification
\begin{equation}
H_{classical} = \frac{p^2}{2m} \Longrightarrow
\hat{H}=\frac{\hat{p}^2}{2m}.
\end{equation}
The classical Hamiltonian function appropriate for a charged
particle (of charge $q$ and mass $m$) acted on by external electric
and magnetic fields (in three dimensions) is given by
\begin{equation}
H_{classical} = \frac{1}{2m} (\vec{p}-q\vec{A})^2 + q\phi.
\end{equation}
The corresponding quantum mechanical Hamiltonian is obtained by
replacing the momentum variable by its operator counterpart giving
the Schr\"{o}dinger equation
\begin{equation}
\hat{H}\psi(\vec{r},t)
=E\psi(\vec{r},t)=i\hbar\frac{\partial}{\partial t}\psi(\vec{r},t)
\end{equation}
where
\begin{equation}
\label{E4''} \hat{H} = \frac{1}{2m}
(\hat{\vec{p}}-q\hat{\vec{A}}(\vec{r},t))^2 + q\hat{\phi}(\vec{r},t)
\end{equation}
and $\vec{r}\in \mathbb{R}^3$ is the position and $t\in
\mathbb{R}^+$ is the time.\\ One must be careful of the ordering of
any differential operators, so we find
\begin{eqnarray}
\label{E4'}
[\hat{\vec{p}}-q\hat{\vec{A}}(\vec{r},t)]^2\psi(\vec{r},t)=
-\hbar^2\vec{\nabla}^2\psi(\vec{r},t)+iq\hbar\vec{\nabla}.[\hat{\vec{A}}(\vec{r},t)\psi(\vec{r},t)]\nonumber\\
+iq\hbar\hat{\vec{A}}(\vec{r},t).[\vec{\nabla}\psi(\vec{r},t)]+q^2[\hat{\vec{A}}(\vec{r},t).\hat{\vec{A}}(\vec{r},t)]\psi(\vec{r},t).
\end{eqnarray}
When the magnetic field $\vec{B}$ is uniform we can take the
potential vector on the following form
$\widehat{\vec{A}}=\frac{1}{2}(\widehat{\vec{B}}\wedge\widehat{\vec{r}}\,)$,
where $\widehat{\vec{r}}$ is the position vector operator. For the
sake of simplicity, we will forget the `hat' symbol on the
operators. Suppose the Euclidian oriented space $\mathbb R^3$ is
endowed with the canonical basis $(e_1',e_2',e_3')$ and coordinates
$(x_1,\,x_2,\,x_3)$, the components of the momentum vector operator
are $\displaystyle{(-i\frac{\partial}{\partial
x_1},-i\frac{\partial}{\partial x_2},-i\frac{\partial}{\partial
x_3})}$. Let $(B_1,B_2,B_3)$ denote the components of the magnetic
field, then the potential vector has components $(A_1,A_2,A_3)$
given by
\begin{eqnarray*}
A_1&=&\frac{1}{2}(B_2 x_3-B_3x_2)\\
A_2&=&\frac{1}{2}(B_3x_1-B_1x_3)\\
A_3&=&\frac{1}{2}(B_1x_2-B_2x_1).
\end{eqnarray*}
We define the differential 1-form $A=A_1\,dx_1+A_2\,dx_2+A_3\,dx_3$.
If we identify at each point $\left({x_1,x_2,x_3} \right)$ the basis
$\left({dx_2\wedge dx_3,dx_3\wedge dx_1,dx_1\wedge dx_2}\right)$ of
the 2-forms with the canonical basis $\left({e'_1,e'_2,e'_3}\right)$
of $\mathbb R ^3$, then we get $dA=\vec B$ since the magnetic field
is uniform.
The corresponding Hamiltonian operator $H$ at
$x=\left({x_1,x_2,x_3}\right)$ is then equal to
 \begin{eqnarray}
H&=&-\frac{\hbar^2}{2m}\bigg\{ {\left( {\frac{\partial } {{\partial
x_1 }}}-i\frac{q}{\hbar}A_1 \right)^{\!2} + \left( {\frac{\partial }
{{\partial x_2 }} -i\frac{q}{\hbar}A_2} \right)^{\!2} + \left(
{\frac{\partial }
{{\partial x_3 }} -i\frac{q}{\hbar}A_3 } \right)^{\!2} }\bigg\}   \nonumber \\
&=&-\frac{\hbar^2}{2m}\bigg\{\frac{{\partial ^2 }} {{\partial x_1^2
}} + \frac{{\partial ^2 }} {{\partial x_2^2 }} + \frac{{\partial ^2
}} {{\partial x_3^2 }} - i\frac{q}{\hbar}B_1 \left( {x_2
\frac{\partial } {{\partial x_3 }} - x_3 \frac{\partial } {{\partial
x_2}}}\right) \nonumber\\
&&- i\frac{q}{\hbar}B_2 \left( {x_3 \frac{\partial } {{\partial x_1
}} - x_1 \frac{\partial } {{\partial x_3}}}\right)-
i\frac{q}{\hbar}B_3\left( {x_1 \frac{\partial } {{\partial x_2 }} -
x_2 \frac{\partial } {{\partial x_1}}} \right) \nonumber\\
&&- \frac{q^2}{4\hbar^2}\|\vec{B}
\|^2\|\vec{x}\|^2+\frac{q^2}{4\hbar^2}(\vec{B}.\vec{x})^2\bigg\}.
\end{eqnarray}
The orbital momentum is :
\begin{eqnarray*}
\vec{L}=\vec{r}\wedge\vec{p}=\left(
\begin{array}{c}L_1\\L_2\\L_3\end{array}\right)=-i\hbar\left(
\begin{array}{c}\displaystyle{{x_2
\frac{\partial } {{\partial x_3 }} - x_3 \frac{\partial } {{\partial
x_2}}}}\\\displaystyle{{x_3 \frac{\partial } {{\partial x_1 }} - x_1
\frac{\partial } {{\partial x_3}}}}\\\displaystyle{{x_1
\frac{\partial } {{\partial x_2 }} - x_2 \frac{\partial } {{\partial
x_1}}}}\end{array}\right).
\end{eqnarray*}
The Hamiltonian $H$ can be written in the following form :
\begin{eqnarray*}
H&=&-\frac{\hbar^2}{2m}\bigg\{\triangle +
\frac{q}{\hbar^2}\vec{L}.\vec{B}- \frac{q^2}{4\hbar^2}\|\vec{B}
\|^2\|\vec{x}\|^2+\frac{q^2}{4\hbar^2}(\vec{B}.\vec{x})^2\bigg\}
\hfill \nonumber
\end{eqnarray*}
If $\vec{B}$ is directed along $e'_3$ then :
\begin{eqnarray*}
H&=&-\frac{\hbar^2}{2m}\bigg\{\frac{{\partial ^2 }} {{\partial x_1^2
}} + \frac{{\partial ^2 }} {{\partial x_2^2 }} + \frac{{\partial ^2
}} {{\partial x_3^2 }} - i\frac{q}{\hbar}B_3 \left( {x_1
\frac{\partial } {{\partial x_2 }} - x_2 \frac{\partial } {{\partial
x_1}}} \right) \\
&&- \frac{q^2}{4\hbar^2}{B_3} ^2(x_1^2 +x_2^2)\bigg\}\\
&=& H_\bot +H_{//}.
\end{eqnarray*}
 In the $\left( {x_1 ,x_2 } \right)$-plane, the particle is
described by the operator
\[ H_\bot = -\frac{\hbar^2}{2m}\bigg\{\frac{{\partial ^2 }} {{\partial x_1^2
}} + \frac{{\partial ^2 }} {{\partial x_2^2 }} - i\frac{q}{\hbar}B_3
\left( {x_1 \frac{\partial } {{\partial x_2 }} - x_2 \frac{\partial
} {{\partial x_1}}} \right) - \frac{q^2}{4\hbar^2}{B_3} ^2(x_1^2
+x_2^2)\bigg\}. \hfill \nonumber\]
$H_{\bot}$ is called the \emph{Landau operator}.\\
The operator $\displaystyle
 H_{//}=-\frac{\hbar^2}{2m}\frac{{\partial ^2}}{\partial x_3^2} $
is the Hamiltonian along the $x_3$-axis.

If we introduce the complex variable $z=x_1+ix_2$, the Landau
operator becomes
\[H_{\bot} =-\frac{\hbar^2}{2m}\bigg\{ 4\frac{\partial }
{{\partial z}}\frac{\partial } {{\partial \bar z}} +
\frac{q}{\hbar}B_3  \left( {z\frac{\partial } {{\partial z}} - \bar
z\frac{\partial } {{\partial \bar z}}} \right)
-\frac{q^2}{4\hbar^2}{B_3} ^2  \left| z \right|^2\bigg\}. \]
All these expressions of $H$ represent the well known Hamiltonian of
a charged particle in a uniform magnetic field (see $\cite{LL}$ and
$\cite{GI}$).

\section{\large Mathematical point of view and the statement of the main
results.}
The space we work in is the Euclidean space $\mathbb {R}^4$
identified with the quaternionic space $\mathbb{H}$. In the basis
$(e_0,e_1,e_2,e_3)$, an element $x$ of $\mathbb R^4$ is written in
the following form :
\begin{eqnarray}
\label{E5}
x&=&x_0e_0+x_1e_1+x_2e_2+x_3e_3 \\
&=&(x_0,\vec{x})
\end{eqnarray}
where $\vec{x}=(x_1,x_2,x_3)\in\mathbb{R}^3\sim\Im m\mathbb{H}$.

 Let $\{1,\textbf{i},\textbf{j},\textbf{k}\}$ be the basis of $\mathbb{H}$, with
\[\textbf{i}^2 = \textbf{j}^2= \textbf{k}^2=-1 \]\[ \textbf{i}\textbf{j}=-\textbf{j}\textbf{i},~~
 \textbf{i}\textbf{k} =-\textbf{k}\textbf{i},~~ \textbf{j}\textbf{k}=-\textbf{k}\textbf{j}.\]
 An element $x$ of $\mathbb{H}$ is written as :
\begin{eqnarray}
\label{E6}
x&=&x_0+x_1\textbf{i}+x_2\textbf{j}+x_3\textbf{k}\\
&=&(x_0,\Im m{x}).
\end{eqnarray}
A $U(1)$ potential will be given by functions $A_{\mu}(x)\in
\mathbb{R}$. It will be convenient if we go further to work with the
differential 1-form on $\mathbb R^4$ defined by
\begin{equation}
A(x)=\sum\limits_{\alpha=0}^3A_\alpha(x)dx_\alpha.
\end{equation}
 The 1-form $A$ has a geometric
 significance: it is the connection form which is used to define
 covariant derivatives.
We shall also write the curvature $F$ as an exterior 2-form
\begin{equation}
F=\frac{1}{2}F_{\mu\nu}dx_\mu\wedge dx_\nu.
\end{equation}
The 2-form $F$ is computed from $A$ by
\begin{eqnarray}
\label{E7}
F &=& d A =\sum\limits_{\mu=0}^3 dA_\mu\wedge dx_\mu \nonumber \\
&=&\frac{1}{2}\sum\limits_{\mu,\nu}F_{\mu\nu}dx_\mu\wedge dx_\nu \nonumber\\
&=&\frac{1}{2}\left(\frac{\partial A_1}{\partial x_0}-\frac{\partial
A_0}{\partial x_1}\right) dx_0\wedge
dx_1+\frac{1}{2}\left(\frac{\partial A_2}{\partial
x_0}-\frac{\partial A_0}{\partial x_2}\right)
dx_0\wedge dx_2+\nonumber\\
&&\frac{1}{2}\left(\frac{\partial A_3}{\partial x_0}-\frac{\partial
A_0}{\partial x_3}\right) dx_0\wedge dx_3
+\frac{1}{2}\left(\frac{\partial A_2}{\partial x_1}-\frac{\partial
A_1}{\partial x_2}\right)
dx_1\wedge dx_2+\nonumber\\
&&\frac{1}{2}\left(\frac{\partial A_3}{\partial x_2}-\frac{\partial
A_2}{\partial x_3}\right) dx_2\wedge dx_3
+\frac{1}{2}\left(\frac{\partial A_1}{\partial x_3}-\frac{\partial
A_3}{\partial x_1}\right) dx_3\wedge dx_1.
\end{eqnarray}
In this work we are concerned by a uniform magnetic field
$\vec{B}\in\mathbb R^3.$ Since $\mathbb R^3$ is identified with $\Im
m\mathbb{H}$, we write
\begin{eqnarray*}
B_q&=&B_1 \textbf{i}+B_2 \textbf{j}+B_3 \textbf{k}\\
&=&\vec B
\end{eqnarray*}
($B_q$ represents the quaternion version of the magnetic field
$\vec{B}\in\mathbb R^3$).
A matrix representation of $\textbf{i} ,\textbf{j}, \textbf{k}$ on
$\mathbb{R}^4$ is:
\begin{eqnarray}
\label{E i} \textbf{i}=\left(
\begin{array}{cccc}
0&-1&\phantom-0&\phantom-0\\
1&\phantom-0&\phantom-0&\phantom-0\\
0&\phantom-0&\phantom-0&-1\\
0&\phantom-0&\phantom-1&\phantom-0
\end{array}
\right) \\
 \label{E j}\textbf{j}=\left(
\begin{array}{cccc}
0&\phantom-0&-1&\phantom-0\\
0&\phantom-0&\phantom-0&\phantom-1\\
1&\phantom-0&\phantom-0&\phantom-0\\
0&-1&\phantom-0&\phantom-0
\end{array}
\right) \\\label{E k}  \textbf{k}=\left(
\begin{array}{cccc}
0&\phantom-0&\phantom-0&-1\\
0&\phantom-0&-1&\phantom-0\\
0&\phantom-1&\phantom-0&\phantom-0\\
1&\phantom-0&\phantom-0&\phantom-0
\end{array}
\right).
\end{eqnarray}
 The matrix associated to the magnetic field $\vec B$ is :
\begin{eqnarray}
\label{E7"} \Omega_{\vec B}=\left(
\begin{array}{cccc}
0&-B_1&-B_2&-B_3\\
B_1&\phantom-0&-B_3&\phantom-B_2\\
B_2&\phantom-B_3&\phantom-0&-B_1\\
B_3&-B_2&\phantom-B_1&\phantom-0
\end{array}\right).
\end{eqnarray}
Since we deal with uniform fields, the matrix $\Omega_{\vec B}$ can
be obtained from $(\ref{E7})$ by considering that the coefficients
of the 2-form $F$ are constants and that $F$ is self-dual 2-form.
We have
\begin{eqnarray}
 \label{E8} \frac{1}{2}\left(\frac{\partial A_1}{\partial
x_0}-\frac{\partial A_0}{\partial x_1}\right)=
E_1,~\frac{1}{2}\left(\frac{\partial A_2}{\partial
x_0}-\frac{\partial A_0}{\partial x_2}\right)=E_2,~
\frac{1}{2}\left(\frac{\partial A_3}{\partial x_0}-\frac{\partial
A_0}{\partial x_3}\right)=E_3
\end{eqnarray}
\begin{eqnarray}
\label{E9} \frac{1}{2}\left(\frac{\partial A_3}{\partial
x_2}-\frac{\partial A_2}{\partial x_3}\right)=B_1,~
\frac{1}{2}\left(\frac{\partial A_1}{\partial x_3}-\frac{\partial
A_3}{\partial x_1}\right)=B_2,~\frac{1}{2}\left(\frac{\partial
A_2}{\partial x_1}-\frac{\partial A_1}{\partial x_2}\right)=B_3
\end{eqnarray}
The matrix associated to $F$ is :
\begin{eqnarray}
\label{E10''} \Omega_{\vec{E},\vec{B}}=\left(
\begin{array}{cccc}
0&-E_1&-E_2&-E_3\\
E_1&\phantom-0&-B_3&\phantom-B_2\\
E_2&\phantom-B_3&\phantom-0&-B_1\\
E_3&-B_2&\phantom-B_1&\phantom-0
\end{array}\right)
\end{eqnarray}
Note the analogy of the tensor $\Omega_{\vec{E},\vec{B}}$ and the
electromagnetic tensor $(F^{\mu\nu})$, which one can derive from the
Maxwell equations (we work in the Euclidean space $\mathbb{R}^4$
instead of the Minkowski space $\mathbb{R}^{3,1}$). This matrix may
be interpreted as the matrix associated to an electric and a
magnetic uniform fields
\[\vec{E}=\left(
\begin{array}{c}E_1\\E_2\\E_3\end{array}\right),\quad \vec{B}=\left(
\begin{array}{c}B_1\\B_2\\B_3\end{array}\right).\]
We define the connection $A$ associated to these uniform fields by
its potential vector
\[ \vec A=\Omega_{\vec {E},\vec{B}}\vec x \]
where
\begin{eqnarray*} \vec x&=&\left(
\begin{array}{c}x_0\\x_1\\x_2\\x_3\end{array}\right)~\in~\mathbb{R}^4,\\
\Omega_{\vec{E},\vec {B}}\vec
x&=&\left(\begin{array}{c}A_0\\A_1\\A_2\\A_3
\end{array} \right)=\left(\begin{array}{c}
-E_1x_1-E_2x_2-E_3x_3\\
\phantom-B_1x_0-B_3x_2+B_2x_3\\
\phantom-B_2x_0+B_3x_1-B_1x_3\\
\phantom-B_3x_0-B_2x_1+B_1x_2
\end{array} \right)\\
\end{eqnarray*}
Let $\star$ be the Hodge operator; the self-duality of the 2-form
$F$ is:
 $\star F=F $, this implies $\vec{E} =\vec{B}$, hence
the matrix $(\ref {E10''})$ becomes :
\[\Omega_{B_{q}}=\left(
\begin{array}{cccc}
0&-B_1&-B_2&-B_3\\
B_1&\phantom-0&-B_3&\phantom-B_2\\
B_2&\phantom-B_3&\phantom-0&-B_1\\
B_3&-B_2&\phantom-B_1&\phantom-0
\end{array}\right)\] which is exactly the matrix $(\ref{E7"})$.
Then,
\begin{eqnarray}
\Omega_{B_{q}}\vec x=\left(\begin{array}{c}A_0\\A_1\\A_2\\A_3
\end{array} \right)=\left(\begin{array}{c}
-B_1x_1-B_2x_2-B_3x_3\\
\phantom- B_1x_0-B_3x_2+B_2x_3\\
\phantom- B_2x_0+B_3x_1-B_1x_3\\
\phantom- B_3x_0-B_2x_1+B_1x_2
\end{array} \right).
\end{eqnarray}
The Landau operator $H_{B_q}$ is obtained from $(\ref{E4''})$ and
$(\ref{E4'})$ (Quantum Mechanics result), (for simplicity we take $
\hbar=1$, $q=1$, and $2m=1$):
\begin{eqnarray}
\label{E10} \nonumber
 \mathcal
H_{B_q}&=&-\left[\left(\!\frac{\partial}{\partial x_0}+
iA_0\!\!\right)^{\!2} \!\!\!+\!\left(\!\frac{\partial}{\partial
x_1} +i A_1\!\!\right)^{\!2} \right.\\
&&\left.+\left(\!\frac{\partial}{\partial x_2} +i
A_2\!\!\right)^{\!2} \!\!\!+\!\left(\!\frac{\partial}{\partial
x_3} +i A_3\!\!\right)^{\!2}\right]\nonumber \\
&&=-\Delta -2i<\Omega_{B_q} \vec x,\nabla> + \|\Omega_{B_q}\vec
x\|^2
\end{eqnarray}
which is the Hamiltonian of a charged particle in
$\mathbb{H}\sim\mathbb{R}^4$ acted on by a uniform magnetic field
$B_q \in \Im m\mathbb{H} \sim \mathbb{R}^3$.
\subsection{Heisenberg group}
We will derive the Landau operator $(\ref{E10})$ from a Lie group
and its associated Lie algebra approach. This Landau operator on the
quaternionic field may be viewed as the Fourier transform of the
sub-Laplacian associated to the quaternionic Heisenberg group $\Im
m\mathbb H\times_\omega\mathbb H$ of dimension $7$, where $\mathbb H
\sim\mathbb{R}^4$ is viewed as the space of quaternions, $\omega$ is
a canonical $\Im m\mathbb H$-valued symplectic form on $\mathbb H $,
$\Im m\mathbb H\sim\mathbb{R}^3$ and $\nu \in \Im m\mathbb H$ is
fixed. This Heisenberg group is associated to the exact sequence
\[
\label{E11} 0\longrightarrow \Im m \mathbb H\longrightarrow \Im m
\mathbb H\times_\omega\mathbb H\longrightarrow \mathbb
H\longrightarrow0 \] that is, \[ 0\longrightarrow
\mathbb{R}^3\longrightarrow
\mathbb{R}^3\times_\omega\mathbb{R}^4\longrightarrow\mathbb{R}^4\longrightarrow0.
\]
\subsubsection{Canonical form on $\mathbb H$}
Define a 2-form $\omega$ on $\mathbb{H}$ by
\[
\omega\left({x,y}\right)=\frac{1}{2}\left({y\bar{x}-x\bar{y}}\right)
\]
for every $x,y\in\mathbb{H},$ where
$\bar{x}=x_{0}\,1-x_{1}\,\textbf{i}-x_{2}\,\textbf{j}-x_{3}\,\textbf{k}$
is the quaternionic conjugate of $x$.
 It is easily checked that $\omega(x,y)\in\Im m\left(\mathbb{H}\right)$ can be written
\[\omega (x,y)=\omega_{1}(x,y)\textbf{i}+\omega_{2}(x,y)\textbf{j}+\omega_{3}(x,y)\textbf{k}\]
where $\omega_{1},\omega_{2},\omega_{3}$ are real valued 2-forms on
$\mathbb{H}.$
\begin{eqnarray*}
\omega_{1}(x,y)&=&x_{0}y_{1}-x_{1}y_{0}+x_{2}y_{3}-x_{3}y_{2}\\
\omega_{2}(x,y)&=&x_{0}y_{2}-x_{2}y_{0}+x_{3}y_{1}-x_{1}y_{3}\\
\omega_{3}(x,y)&=&x_{0}y_{3}-x_{3}y_{0}+x_{1}y_{2}-x_{2}y_{1}.
\end{eqnarray*}
$\omega_{1},\omega_{2},\omega_{3}$ are symplectic forms on
$\mathbb{R}^4 ,$ so that $\omega$ can be viewed as a
$\mathbb{R}^3$-valued
 symplectic form on $\mathbb{R}^4$ given by
\[
\omega\left({x,y}\right)=\omega_{1}(x,y)e'_{1}+\omega_{2}(x,y)e'_{2}+\omega_{3}(x,y)e'_{3}
\]
where $(e'_{1},e'_{2},e'_{3})$ is the canonical basis of
$\mathbb{R}^3$ corresponding to the basis
$(\textbf{i},\textbf{j},\textbf{k})$ of $\Im m(\mathbb{H}).$
 In the following we consider the
{quaternionic Heisenberg group}
$N_\omega=\mathbb{R}^3\times_\omega\mathbb{R}^4$       associated to
the symplectic form $\omega$
 defined above. Note that $N_\omega$ is topologically
 $\mathbb{R}^3\times\mathbb{R}^4\cong\mathbb{R}^7$.

\subsubsection{Heisenberg group associated to $ \omega $}
 Let (t,x) be an element of $(\mathbb
R^3\times\mathbb R^4)$, where:
\[  t\in \mathbb R^3,\;
t=t_1e'_1+t_2e'_2+t_3e'_3 \]
\[ x\in \mathbb
R^4,\; x=x_0 e_0+x_1 e_1+x_2 e_2+x_3 e_3.\]
We define the
multiplication law $\cdot_\omega$ in $\mathbb R^3\times\mathbb R^4$
by
\[(t,x)\cdot_\omega(t',x')=
\Big(t+t'+\omega(x,x'),x+x'\Big).\]It is easily verified that
$\bigg(\mathbb R^3\times\mathbb R^4,\cdot_\omega\bigg)$ is the
Heisenberg group associated to the form $\omega$. A matrix
representation of the Heisenberg group element $(t,x)$ is :
\[\zeta(t,x)=\left(
\begin{array}{*{20}c}
    1 &  0 &  0 & -x_1 &  x_0 & -x_3 &  x_2 & t_1\\
    0 &  1 &  0 & -x_2 &  x_3 &  x_0 & -x_1 & t_2\\
    0 &  0 &  1 & -x_3 & -x_2 &  x_1 &  x_0 & t_3\\
    0 &  0 &  0 &    1 &   0  &    0 &    0 & x_0\\
    0 &  0 &  0 &    0 &   1  &    0 &    0 & x_1\\
    0 &  0 &  0 &    0 &   0  &    1 &    0 & x_2\\
    0 &  0 &  0 &    0 &   0  &    0 &    1 & x_3\\
    0 &  0 &  0 &    0 &   0  &    0 &    0 &   1\\
 \end{array}
\right) \!\! .\]
 It is easily checked that
\[\zeta(t,x)\zeta(t',y)=\zeta(t+t'+\omega (x,y),x+y)\]
and $\zeta(t,x)$ is an invertible matrix with inverse
$\zeta(-t,-x)$. Hence
$N_\omega=\mathbb{R}^3\times_\omega\mathbb{R}^4$
 may be viewed as a Lie subgroup of the affine group of $\mathbb{R}^7$ :
\[
N_\omega \subset \textit{Aff }(\mathbb{R}^7)\subset
GL(\mathbb{R}^8).
\]

\subsubsection{Lie algebra of the Heisenberg group }
One can determine in the usual manner the infinitesimal generators
which characterize the Lie algebra of the Heisenberg group. We
obtain a Heisenberg algebra $\mathfrak{h}$ of dimension 7 generated
by the following vector fields:
\begin{eqnarray}
\label{E_0} F_0 & = &\frac{\partial}{{\partial x_0}} - x_1
\frac{\partial}{{\partial t_1}} - x_2 \frac{\partial}{{\partial
t_2}} - x_3 \frac{\partial}{{\partial t_3}}  \\
\label{E_1} F_1 & = &\frac{\partial}{{\partial x_1}}+
  x_0 \frac{\partial}{{\partial t_1}}
+ x_3 \frac{\partial}{{\partial t_2}} - x_2
\frac{\partial}{{\partial t_3}}  \\
\label{E_2} F_2 & = &\frac{\partial}{{\partial x_2}} - x_3
\frac{\partial}{{\partial t_1}} + x_0 \frac{\partial}{{\partial
t_2}} + x_1 \frac{\partial}{{\partial t_3}}  \\
\label{E_3} F_3  & = &\frac{\partial}{{\partial x_3}}+
  x_2 \frac{\partial}{{\partial t_1}}
- x_1 \frac{\partial}{{\partial t_2}} + x_0
\frac{\partial}{{\partial t_3}}
\end{eqnarray}
and
\[
 T_1 = \frac{\partial}{{\partial t_1}} \, , \quad  T_2 =
\frac{\partial}{{\partial t_2}} \, , \quad  T_3 =
\frac{\partial}{{\partial t_3}} \, .
\]
  The commutation relations between the generators are:
\[ [T_\lambda ,T_\mu     ] = 0\quad (\lambda ,\mu = 1,2,3),\]
\[ [F_\alpha  ,T_\lambda ] = 0\quad (\alpha = 0,\ldots ,3;\quad \lambda = 1,2,3)\]
and
 \begin{eqnarray*} [F_\alpha ,F_\beta ]& =& 2T_{\gamma}
\end{eqnarray*}
where $(\alpha\beta\gamma)$ is a circular permutation of $(123)$.
The center of the algebra $\mathfrak h$ is the abelian ideal
generated by $T_1$, $T_2$ and $T_3$. This algebra is a 2-step
nilpotent Lie algebra.
\section{The Landau operator $\mathcal{H}_{\vec{\nu}}$}
In the universal envelopping algebra $\mathcal{U}(\mathfrak {h})$ of
$\mathfrak{h}$, we define the following quadratic operator, which is
the Laplace element in $\mathcal{U}(\mathfrak {h})$:
\[\mathcal Q =F_0^2+ F_1^2+F_2^2+F_3^2+ T_1^2+ T_2^2+ T_3^2.\]
Explicitly, $\mathcal Q$ takes the following form:
\begin{eqnarray}
 \mathcal Q&=&\left(\frac{\partial}{{\partial
x_0}}  - x_1 \frac{\partial}{{\partial t_1}}- x_2
\frac{\partial}{{\partial t_2}} - x_3 \frac{\partial}{{\partial
t_3}} +\right)^{\!\!2}+\left(\frac{\partial}{{\partial x_1}}+ x_0
\frac{\partial }{\partial t_1} + x_3 \frac{\partial}{\partial t_2 }
- x_2 \frac{\partial}{\partial t_3 } \right)^{\!\!2}
\nonumber\\
&&+\left(\frac{\partial}{{\partial x_2}} - x_3 \frac{\partial
}{\partial t_1} + x_0 \frac{\partial}{\partial t_2} + x_1
\frac{\partial}{\partial t_3}\right)^{\!\!2}+\left(
\frac{\partial}{{\partial x_3}}+x_2 \frac{\partial }{\partial t_1}-
x_1\frac{\partial}{\partial t_2} + x_0 \frac{\partial}{\partial t_3}
 \right)^{\!\!2} \nonumber \\
&&+\frac{\partial ^2}{\partial t_1^2} \; + \; \frac{\partial ^2}
{\partial t_2^2} \; + \; \frac{\partial ^2}{\partial t_3^2} \;
\end{eqnarray}
and may be written
\[ \mathcal Q =   \Delta_{sub}^{N_\omega} + \Delta ^{\mathbb{R}^3} \,  \]
where
\[ \Delta ^{\mathbb{R}^3 }  = \frac{{\partial ^2 }}
{{\partial t_1^2 }} + \frac{{\partial ^2 }} {{\partial t_2^2 }} +
\frac{{\partial ^2 }} {{\partial t_3^2 }}
\]
is the standard Laplacian on $\mathbb{R}^3$ and
$\Delta_{sub}^{N_\omega}$ is the {sub-Laplacian} associated to the
Heisenberg group $N_\omega=\mathbb{R}^3\times_\omega\;\mathbb{R}^4$.
The partial Fourier transform $\mathcal{F}(\mathcal Q)$ of
$\mathcal{Q}$ on the $t$-variable is obtained by replacing
$\displaystyle{\frac{\partial}{{\partial t_\lambda}} }$ by $i\nu
_\lambda$, $\lambda  = 1,2,3$, (where $\vec\nu=(\nu_1,\nu_2,\nu_3)$
is the dual variable of $\vec{t}=(t_1,t_2,t_3)$); this defines an
operator denoted by $\mathcal{F}(\mathcal Q)$ which has the form:
\begin{eqnarray}
\mathcal{F}(\mathcal{Q})=\mathcal{F}(\Delta_{sub}^{N_\omega})+\mathcal{F}(\Delta
^{\mathbb{R}^3}){~~~~~~~~~~~~~~~~~~~~~~~~~~~~~~~~~~~~~~~~~~~~~~~~~~~~~~~~~~~}\nonumber\\
=\nu_1^2+\nu_2^2+\nu_3^2{~~~~~~~~~~~~~~~~~~~~~~~~~~~~~~~~~~~~~~~~~~~~~~~~~~~~~~~~~~~~~~~~~~~~~~~~~~~}\nonumber
\nonumber
\\+\left( \frac{\partial }{{\partial x_0 }}+i\left(
{-\nu_1x_1-\nu_2x_2-\nu_3x_3} \right)\right)^{2} +
\left(\frac{\partial }{{\partial x_1 }}+\!i\left(
{\nu_1x_0-\nu_3x_2+\nu_2x_3 } \right)\right)^{2}  \nonumber \\
+\left(\frac{\partial }{{\partial x_2 }}+
i\left({\nu_2x_0+\nu_3x_1-\nu_1x_3 } \right) \right)^{2} +
\left({\frac{\partial }{{\partial x_3 }}}+ i\left( {
\nu_3x_0-\nu_2x_1+\nu_1x_2}\right) \right)^{2}
\end{eqnarray}
with
\begin{eqnarray}
\mathcal{F}(\Delta ^{\mathbb{R}^3 })=\nu_1^2+\nu_2^2+\nu_3^2
\end{eqnarray}
and
\begin{eqnarray}
\label{E18}
\mathcal{F}(\Delta_{sub}^{N_\omega})={~~~~~~~~~~~~~~~~~~~~~~~~~~~~~~~~~~~~~~~~~~~~~~~~~~~~~~~~~}\nonumber
\\\left( \frac{\partial }{{\partial x_0 }}+\!i\left(
{-\nu_1x_1-\nu_2x_2-\nu_3x_3} \right)\right)^{\!\!2} \!\! +
\left(\frac{\partial }{{\partial x_1 }}+\!i\left(
{\nu_1x_0-\nu_3x_2+\nu_2x_3 } \right)\right)^{\!\!2}  \nonumber \\
{} +  \left(\frac{\partial }{{\partial x_2 }}+\!
i\left({\nu_2x_0+\nu_3x_1-\nu_1x_3 } \right) \right)^{\!\!2} \!\! +
\left({\frac{\partial }{{\partial x_3 }}}+\! i\left( {
\nu_3x_0-\nu_2x_1+\nu_1x_2} \right) \right)^{\!\!2}\!\!.
\end{eqnarray}
By identifying $\nu_\lambda$ with $B_\lambda, \lambda = 1,2,3$ we
have
\begin{equation}
\label{E19}
\mathcal{H}_{\vec{\nu}}=-\mathcal{F}(\Delta_{sub}^{N_\omega})
\end{equation}
which is exactly the expression $(\ref {E10})$; then we can say that
the variables $(t_1, t_2, t_3 )$ are the dual variables of the
components of the magnetic field
$\vec{B}=(B_1,B_2,B_3)=(\nu_1,\nu_2,\nu_3)$; and that the Landau
operator is equal to minus the partial Fourier transform of the
sub-laplacian associated to the
Heisenberg group $\mathbb{R}^3\times_\omega\mathbb{ R}^4$.

$\mathcal{F}(\Delta ^{\mathbb{R}^3 })$ is the intensity of the
magnetic field.
\subsection{Angular momentum in $\mathbb{R}^4$}
 The { momentum operator} of a particle in $\mathbb R^3$ with
coordinates $(x_1,x_2,x_3)$ is defined by $\hat p = -i\hbar \widehat
{\nabla}$ and its components are:
\[
\hat p_x  =  - i\hbar\frac{\partial }{{\partial x_1}}\, ,\quad \hat
p_y  = - i\hbar\frac{\partial }{{\partial x_2}}\, ,\quad \hat p_z  =
- i\hbar\frac{\partial }{{\partial x_3}} \, .
\]
The {angular momentum} operator $\widehat{\vec l}$ is defined by
$\widehat {\vec l} = \widehat {\vec x} \wedge \widehat {\vec p}$,
where $\widehat {\vec x}$
 is the position operator of the particle defined by its components $(\hat x_1,\hat x_2,\hat
x_3)$. Thus $\widehat {\vec x} \wedge \widehat {\vec p}$ has
components
\begin{eqnarray*}
\hat l_{x} & =& -i\hbar\left({{x_2}\frac{\partial }{{\partial
{x_3}}}}-{{x_3}\frac{\partial }{{\partial {x_2}}}} \right)  \\
\hat l_{y} &=& -i\hbar\left({ {x_3}\frac{\partial }{{\partial
{x_1}}}}-{ {x_1}\frac{\partial }{{\partial {x_3}}}} \right) \\
\hat l_{z} &=& -i\hbar\left({{x_1}\frac{\partial }{{\partial
{x_2}}}}- {{x_2}\frac{\partial }{{\partial {x_1}}}} \right)
\end{eqnarray*}
and these operators verify the commutation relations:
\[
\left[ {\hat l_x ,\hat l_y } \right] = i\hat l_z \,,\quad \left[
{\hat l_y ,\hat l_z } \right] = i\hat l_x \,,\quad \left[ {\hat l_z
,\hat l_x } \right] = i\hat l_y \,.
\]
In $\mathbb R^4$ the position operator is :
\[
\widehat x = x_0 e_0  + x_1 e_1  + x_2 e_2  + x_3 e_3
\]
and the nabla operator
\[
\widehat {\partial _x}  =
  \frac{\partial }{{\partial x_0 }}\,e_0
+ \frac{\partial }{{\partial x_1 }}\,e_1 + \frac{\partial
}{{\partial x_2 }}\,e_2 + \frac{\partial }{{\partial x_3 }}\,e_3 .
\]
Here $\widehat x $ and $\widehat {\partial _x}$ are meant to be
elements of the free module
\[
\mathcal{A}\left\langle {e_0 ,e_1 ,e_2 ,e_3 } \right\rangle  \cong
\mathcal{A} \otimes _{ \mathbb{R}} \mathbb{R}\left\langle {e_0 ,e_1
,e_2 ,e_3 } \right\rangle
\]
 with basis $\left( {e_0 ,e_1 ,e_2 ,e_3 } \right)$  and coefficients in the {Weyl
algebra} $\mathcal A$ of linear operators on $C^{\infty}(\mathbb R^4
; \mathbb C)$ with generators:
 the identity map on $C^{\infty}(\mathbb R^4 ; \mathbb C)$,
 multiplication by $x_\alpha$,
 partial derivatives $\frac{\partial }{{\partial x_\alpha  }}$ and relations: $ [ {\frac{\partial } {{\partial
x_\alpha  }},x_\beta  } ] = \delta _{\alpha \beta }$, for $\alpha
,\beta = 0,...,3$, the other commutators being zero.
We now consider
the exterior product $\widehat x \wedge \widehat {\partial _x}$ ,
belonging to the module $\Lambda ^2 \left( \mathcal{A}\left\langle
{e_0 ,e_1 ,e_2 ,e_3 } \right\rangle \right)$ which is contained in
$\mathcal A \otimes _{\mathbb R} \Lambda ^2(\mathbb R^4)$:
\begin{eqnarray*}
  \widehat x \wedge \widehat \partial _x & = & \left( {x_0 \frac{\partial }
{{\partial x_1 }} - x_1 \frac{\partial } {{\partial x_0 }}}
\right)e_0  \wedge e_1  + \left( {x_2 \frac{\partial } {{\partial
x_3 }} - x_3 \frac{\partial }
{{\partial x_2 }}} \right)e_2  \wedge e_3  +   \\
{ } & &
  + \left( {x_0 \frac{\partial }
{{\partial x_2 }} - x_2 \frac{\partial } {{\partial x_0 }}}
\right)e_0  \wedge e_2  + \left( {x_3 \frac{\partial } {{\partial
x_1 }} - x_1 \frac{\partial }
{{\partial x_3 }}} \right)e_3  \wedge e_1  +   \\
{ } & & + \left( {x_0 \frac{\partial } {{\partial x_3 }} - x_0
\frac{\partial } {{\partial x_3 }}} \right)e_0  \wedge e_3  + \left(
{x_1 \frac{\partial } {{\partial x_2 }} - x_2 \frac{\partial }
{{\partial x_1 }}} \right)e_1  \wedge e_2 .
\end{eqnarray*}
In the orthonormal direct basis $\left( {e_0 ,e_1 ,e_2 ,e_3 }
\right)$ of $\mathbb R^4$, the Hodge $\star$-operator is a linear
involution of $\Lambda ^2(\mathbb R^4)$ (i.e. $\star ^2 =
id_{\mathbb R^4}$) given by
\begin{eqnarray*}
  \star  \;e_0  \wedge e_1  &=& e_2  \wedge e_3  \\
  \star  \;e_0  \wedge e_2  &=& e_3  \wedge e_1  \\
  \star  \;e_0  \wedge e_3  &=& e_1  \wedge e_2.
\end{eqnarray*}
We have:\\
$ \widehat x \wedge \widehat \partial _x  +  \star \left( {\widehat
x \wedge \widehat
\partial _x } \right)  = $
 \begin{eqnarray}
 \label{momentum}
{} & = &  \left( {x_0 \frac{\partial } {{\partial x_1 }} - x_1
\frac{\partial } {{\partial x_0 }} + x_2 \frac{\partial } {{\partial
x_3 }} - x_3 \frac{\partial } {{\partial x_2 }}} \right)\left( {e_0
\wedge e_1  + e_2  \wedge e_3 } \right) + \nonumber
  \\
{} & + &   \left( {x_0 \frac{\partial } {{\partial x_2 }} - x_2
\frac{\partial } {{\partial x_0 }} + x_3 \frac{\partial } {{\partial
x_1 }} - x_1 \frac{\partial } {{\partial x_3 }}} \right)\left( {e_0
\wedge e_2  + e_3  \wedge e_1 } \right) +  \nonumber
  \\
{} & +  & \left( {x_0 \frac{\partial } {{\partial x_3 }} - x_0
\frac{\partial } {{\partial x_3 }} + x_1 \frac{\partial } {{\partial
x_2 }} - x_2 \frac{\partial } {{\partial x_1 }}} \right)\left( {e_0
\wedge e_3  + e_1  \wedge e_2 } \right).
\end{eqnarray}
The space of self-adjoint elements of  $\Lambda ^2 \left(
{\mathbb{R}^4 } \right)$ is a 3-dimensional vector space with basis
\[\left( {e_0  \wedge e_1  + e_2  \wedge e_3 ,\; e_0  \wedge e_2  + e_3  \wedge e_1 ,\;
e_0  \wedge e_3  + e_1  \wedge e_2 } \right) , \] which we identify
with the basis $\left( {e'_1 ,e'_2 ,e'_3 } \right)$ of $\mathbb
R^3$, so that the self-adjoint elements of $\Lambda ^2 \left(
\mathcal{A} \left \langle {e_0 ,e_1 ,e_2 ,e_3 } \right \rangle
\right)$ can be viewed as elements of $ \mathcal{A}\left\langle
{e'_1 ,e'_2 ,e'_3 } \right\rangle  \cong \mathcal{A} \otimes
_{\mathbb R} \mathbb{R}\left\langle {e'_1 ,e'_2 ,e'_3 }
\right\rangle $.

We define the {angular momentum operator} of a particle in $\mathbb
R^4 $ as the element $\widehat l$ of $\mathcal{A} \otimes _{\mathbb
R} \mathbb{C}\left\langle {e'_1 ,e'_2 ,e'_3 } \right\rangle$ given
by
\[ \widehat l = -i\hbar \left(\widehat x \wedge \widehat \partial _x  +  \star \;\left( {
\widehat x \wedge \widehat \partial _x } \right) \right) . \]
The components of $\widehat l$  are :
\[
\begin{gathered}
\label{momenta0}
 l_1  =  - i\hbar\left( {x_0 \frac{\partial }
{{\partial x_1 }} - x_1 \frac{\partial } {{\partial x_0 }} + x_2
\frac{\partial } {{\partial x_3 }} - x_3 \frac{\partial }
{{\partial x_2 }}} \right)  \hfill \\
  l_2  =  - i\hbar\left( {x_0 \frac{\partial }
{{\partial x_2 }} - x_2 \frac{\partial } {{\partial x_0 }} + x_3
\frac{\partial } {{\partial x_1 }} - x_1 \frac{\partial }
{{\partial x_3 }}} \right)  \hfill \\
  l_3  =  - i\hbar\left( {x_0 \frac{\partial }
{{\partial x_3 }} - x_3 \frac{\partial } {{\partial x_0 }} + x_1
\frac{\partial } {{\partial x_2 }} - x_2 \frac{\partial }
{{\partial x_1 }}} \right)  \hfill \\
\end{gathered}
\]
The components of $\widehat l$ in terms of the complex variables
$z'_1=x_0 +ix_1$ and $z'_2=x_2 +ix_3$ are :
\begin{eqnarray*}
\label{momenta1}
 l_1  &=&\hbar \bigg(z'_1\frac{\partial}{\partial z'_1}
-\bar z'_1 \frac{\partial}{{\partial \bar z'_1 }}
+z'_2\frac{\partial}{{\partial z'_2}}
-\bar z'_2 \frac{\partial}{{\partial\bar z'_2 }}\bigg)  \\
  l_2  &=&-i\hbar \bigg(\bar z'_1 \frac{\partial}{{\partial z'_2}}
-\bar z'_2 \frac{\partial}{{\partial {z'_1}}}-z'_2
\frac{\partial}{{\partial\bar z'_1 }}
+{z'_1}\frac{\partial}{{\partial{\bar z'_2}}} \bigg)  \\
  l_3  &=& \hbar\bigg(\bar z'_1 \frac{\partial}{{\partial z'_2}}
-\bar z'_2 \frac{\partial}{{\partial
{z'_1}}}-{z'_1}\frac{\partial}{{\partial{\bar z'_2 }}}+z'_2
\frac{\partial}{{\partial\bar z'_1 }}
 \bigg).
\end{eqnarray*}
The expressions $(\ref {E10})$ and $(\ref{E19})$ can be written in
terms of the angular momentum as follows
\begin{eqnarray}
\mathcal{H}_{\vec{\nu}}=-\Delta+2<\vec{\nu},\vec{l}>
+\|\vec{\nu}\|^2\|x\|_{\mathbb{R}^4}
\end{eqnarray}
\subsection{Complex form of $\mathcal{H}_{\vec{\nu}}=\mathcal{H}_{\vec{B}}$}

Since $\mathbb R^4$ is isomorphic to $\mathbb C^2$, we define the
complex coordinates on $\mathbb R^4$ by $z'_1=x_0+ix_1$ and
$z'_2=x_2+ix_3$; we have:
\begin{eqnarray*}
 \frac{\partial}{\partial
z'_1}= \frac{1}{2}(\frac{\partial}{\partial
x_0}-i\frac{\partial}{\partial x_1}),~~~~~~ \frac{\partial}{\partial
\bar z'_1}= \frac{1}{2}(\frac{\partial}{\partial
x_0}+i\frac{\partial}{\partial x_1})\\ \frac{\partial}{\partial
z'_2}= \frac{1}{2}(\frac{\partial}{\partial
x_2}-i\frac{\partial}{\partial x_3}),~~~~~~ \frac{\partial}{\partial
\bar z'_2}= \frac{1}{2}(\frac{\partial}{\partial
x_2}+i\frac{\partial}{\partial x_3})
\end{eqnarray*}
then
\begin{eqnarray*}
\label{E17}
 \mathcal{H}_{\vec{\nu}}&=&
-4\frac{\partial^2}{\partial~z'_1\partial\bar z'_1 }
-4\frac{\partial^2}{\partial z'_2\partial\bar z'_2 }
+2\nu_1\bigg(z'_1\frac{\partial}{\partial z'_1} -\bar z'_1
\frac{\partial}{{\partial \bar z'_1 }}
+z'_2\frac{\partial}{{\partial z'_2}} -\bar z'_2
\frac{\partial}{{\partial\bar z'_2 }}\bigg)\\
&& -2i\nu_2\bigg(\bar z'_1 \frac{\partial}{{\partial z'_2}} -\bar z
'_2 \frac{\partial}{{\partial {z'_1}}}-z'_2
\frac{\partial}{{\partial\bar z'_1 }}
+{z'_1}\frac{\partial}{{\partial{\bar z'_2 }}} \bigg)\\
&&+2\nu_3\bigg(\bar z'_1 \frac{\partial}{{\partial z'_2}} -\bar z'_2
\frac{\partial}{{\partial {z'_1}}}+z'_2
\frac{\partial}{{\partial\bar z'_1 }}
-{z'_1}\frac{\partial}{{\partial{\bar z'_2 }}} \bigg) +|\vec{\nu}|^2
(|z'_1|^2+|z'_2|^2).
\end{eqnarray*}
By introducing the angular momentum $\vec l$, the precedent formula
may be condensed into
\begin{eqnarray}
\label{E17'} \nonumber
 \mathcal{H}_{\vec{\nu}}&=&
-4\frac{\partial^2}{\partial~z'_1\partial\bar{z'_1}}
-4\frac{\partial^2}{\partial z'_2\partial\bar z'_2 }
 +2<\vec{\nu},\vec{l}>
 +|\vec{\nu}|^2 (|z'_1|^2+|z'_2|^2).
\end{eqnarray}
\section{Canonical Landau operator $\mathcal{H}_{\|\vec{\nu}\|}$}
\subsection{Canonical form of the matrix $\Omega_{\vec \nu}$}
 As we have seen above, for each uniform magnetic field
$\vec{B}=\vec{\nu}=(\nu_1, \nu_2, \nu_3)$, we associate the linear
operator $\Omega_{\vec{\nu}}$ acting on $\mathbb R^{4}$, defined by
its matrix in the canonical basis $(e_0, e_1, e_2,e_3)$
\begin{displaymath}
\Omega_{\vec{\nu}} = \left(\begin{array}{cccc} 0 & -\nu_1 & -\nu_2 & -\nu_3 \\
\nu_1 &\phantom- 0 & -\nu_3 &\phantom- \nu_2 \\
\nu_2 &\phantom- \nu_3 &\phantom- 0 & -\nu_1 \\
\nu_3 & -\nu_2 &\phantom- \nu_1 &\phantom- 0
\end{array}\right)
\end{displaymath}
We remark that:
\begin{equation}
\label{E 29} {\Omega_{\vec{\nu}}}^2=-{\|\vec{\nu}\|}^2
\mathbb{I}_{4}
\end{equation}
 $\mathbb{I}_{4}$ is the identity matrix on $\mathbb R^{4}$.

Since $\textbf{i}^2=\textbf{j}^2=\textbf{k} =-1$, as particular
solutions of the equation ($\ref {E 29}$) we have:
 \begin{equation}
\label{E 30} {\Omega_{\vec{\nu}}}=\pm{\|\vec{\nu}\|}\textbf{i}
\end{equation}
 \begin{equation}
\label{E 31} {\Omega_{\vec{\nu}}}=\pm{\|\vec{\nu}\|}\textbf{j}
\end{equation}
 \begin{equation}
\label{E 32} {\Omega_{\vec{\nu}}}=\pm{\|\vec{\nu}\|}\textbf{k}
\end{equation}
We have three particular orthogonal complex structures ; these
solutions have a physical interpretation, since $\Im m\mathbb{H}\sim
\mathbb{R}^3$, the equation (\ref{E 30}) means that the magnetic
field is taken along the $x_1$ axis, the equation (\ref{E 31}) means
that the magnetic field is taken along the $x_2$ axis and the
equation (\ref{E 32}) means that the magnetic field is taken along
the $x_3$ axis. We will prove that, for any $\vec{\nu}\in \mathbb
R^{3}$ there exist an orthogonal transformation which passes from
$\Omega_{\vec{\nu}}$ to
$\Omega_{\|\vec{\nu}\|}=\|\vec{\nu}\|\textbf{i}$ or to
$\Omega_{\|\vec{\nu}\|}=\|\vec{\nu}\|\textbf{j}$ or to
$\Omega_{\|\vec{\nu}\|}=\|\vec{\nu}\|\textbf{k}$ ; $
\textbf{i},\textbf{j},\textbf{k}$ are represented by the matrices
$(\ref{E i})$, $(\ref{E j})$ and $(\ref{E k})$.
\begin{thm}
\label{thm11} For every $\vec {\nu}=(\nu_1, \nu_2, \nu_3) \in
\mathbb R^{3}$, there exists $\mathcal{R}\in \mathrm{SO(4)}$ such
that
\begin{equation}
\mathcal{R} \, \Omega_{\vec{\nu}} \,
\mathcal{R^{\rm{-1}}}=\Omega_{\|\vec{\nu}\|} \
\end{equation}
\end{thm}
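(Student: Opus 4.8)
The plan is to exploit the fact established in equation (\ref{E 29}), namely $\Omega_{\vec{\nu}}^2=-\|\vec{\nu}\|^2\mathbb{I}_4$, which says that $J:=\frac{1}{\|\vec{\nu}\|}\Omega_{\vec{\nu}}$ is an orthogonal complex structure on $\mathbb{R}^4$ (assuming $\vec{\nu}\neq 0$; the case $\vec{\nu}=0$ is trivial since then $\Omega_{\vec{\nu}}=0=\Omega_{\|\vec{\nu}\|}$). First I would check that $\Omega_{\vec{\nu}}$ is skew-symmetric, which is immediate from the displayed matrix, so together with $J^2=-\mathbb{I}_4$ we get that $J$ lies in $\mathrm{SO}(4)$ and is orthogonal. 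The strategy is then to show that all orthogonal complex structures on $\mathbb{R}^4$ lying in the same "orientation class" are conjugate under $\mathrm{SO}(4)$, and that $J$ lies in the same class as the structure $\mathbf{i}$ given by (\ref{E i}).

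The key steps: (1) Pick any unit vector $u\in\mathbb{R}^4$. Then $\{u, Ju\}$ is an orthonormal pair because $\langle u,Ju\rangle=0$ (skew-symmetry) and $\|Ju\|=\|u\|=1$ (orthogonality). (2) Choose a unit vector $v$ orthogonal to both $u$ and $Ju$; then $\{u,Ju,v,Jv\}$ is an orthonormal basis of $\mathbb{R}^4$ — one must verify $Jv\perp u$, $Jv\perp Ju$, $Jv\perp v$, which all follow from skew-symmetry and $J^2=-\mathbb{I}_4$. (3) In this ordered basis, $J$ has exactly the matrix $\mathbf{i}$ of (\ref{E i}), since $Ju$ is the second basis vector, $J(Ju)=-u$, $Jv$ is the fourth, $J(Jv)=-v$. (4) Define $\mathcal{R}$ to be the change-of-basis matrix sending the canonical basis $(e_0,e_1,e_2,e_3)$ to $(u,Ju,v,Jv)$; being a map between two orthonormal bases, $\mathcal{R}\in\mathrm{O}(4)$, and by reordering or replacing $v$ with $-v$ (which keeps $\{u,Ju,v,Jv\}$ a valid basis of the required form) we can arrange $\det\mathcal{R}=1$, so $\mathcal{R}\in\mathrm{SO}(4)$. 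Then $\mathcal{R}^{-1}\Omega_{\vec{\nu}}\mathcal{R}=\|\vec{\nu}\|\,\mathbf{i}=\Omega_{\|\vec{\nu}\|}$, hence $\mathcal{R}\,\Omega_{\vec{\nu}}\,\mathcal{R}^{-1}$ equals this after replacing $\mathcal{R}$ by $\mathcal{R}^{-1}$; adjust the naming accordingly. One could equivalently land on $\mathbf{j}$ or $\mathbf{k}$ by permuting the basis, as the remark preceding the theorem indicates.

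The main obstacle is the orientation/determinant bookkeeping in step (4): one must ensure the basis $(u,Ju,v,Jv)$ can be chosen positively oriented. This is not automatic — a priori $\mathcal{R}$ could have determinant $-1$ — but it is fixable: swapping to $(v,Jv,u,Ju)$ is an even permutation of the four vectors and preserves the "$J$ acts as $\mathbf{i}$" structure, while the transposition $(u,Ju)\mapsto(Ju,-u)$ also preserves the structure; composing such moves lets us flip the sign of the determinant, so a positively oriented choice always exists. An alternative, cleaner route would be to diagonalize $\Omega_{\vec{\nu}}$ over $\mathbb{C}$: its eigenvalues are $\pm i\|\vec{\nu}\|$ each with multiplicity two (forced by skew-symmetry and (\ref{E 29})), and the standard normal form theorem for real skew-symmetric matrices gives an orthogonal $\mathcal{R}$ putting it in block-diagonal form with two identical $2\times 2$ rotation blocks, which is precisely $\|\vec{\nu}\|\,\mathbf{i}$; the orientation can then be fixed by the same parity argument. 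Either way, the computational heart — verifying the $2\times2$ block is what we want — is routine once the complex-structure viewpoint is in place.
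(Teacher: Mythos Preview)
Your approach is the abstract version of the paper's: they choose $u=e_0$ explicitly, then a specific $v$ in the span of $e_2,e_3$ (after a case split on whether $\nu_2^2+\nu_3^2=0$), write out the resulting $4\times4$ matrix $\mathcal{R}$, and verify $\det\mathcal{R}=1$ by direct computation. Your general construction subsumes this and avoids the case split, which is cleaner.

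However, your orientation fix in step (4) is broken. None of the moves you list changes the determinant: replacing $v$ by $-v$ scales two columns by $-1$; the swap $(u,Ju,v,Jv)\mapsto(v,Jv,u,Ju)$ is an even permutation; and $(u,Ju)\mapsto(Ju,-u)$ is a $2\times2$ rotation block of determinant $+1$. In fact no such move can flip the sign, because any two orthonormal bases of the form $(u,Ju,v,Jv)$ for a fixed $J$ differ by a matrix commuting with $\textbf{i}$, i.e.\ by an element of $U(2)\subset SO(4)$; the orientation is an invariant of $J$ alone. What saves you is that it is automatically $+1$ here: the map $\vec{\nu}\mapsto\frac{1}{\|\vec{\nu}\|}\Omega_{\vec{\nu}}$ is continuous from $S^2$ into $O(4)$, so all these $J$ lie in one component, and at $\vec{\nu}=(1,0,0)$ one has $J=\textbf{i}$ with $(e_0,\textbf{i}e_0,e_2,\textbf{i}e_2)=(e_0,e_1,e_2,e_3)$ positively oriented. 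Replace your parity argument with this continuity observation and the proof is complete.
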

\begin{proof}
 We give the proof for the case
$\Omega_{\|\vec{\nu}\|}=\|\vec{\nu}\|\textbf{i}$; the proof for the
other cases is similar to this one. We denote as above $(e_0, e_1,
e_2, e_3)$ the canonical basis of oriented Euclidian space $\mathbb
R^{4}$ and we identify a linear operator in $\mathbb R^{4}$ with its
matrix in this basis.

1. If $\nu_2=\nu_3=0$ then
\begin{displaymath}
\Omega_{\vec{\nu}} = \left(
\begin{array}{cccc}
0 & -\nu_1 & 0 & 0 \\
\nu_1 & 0 & 0 &0 \\
0 & 0 & 0 & -\nu_1 \\
0 & 0 & \nu_1 & 0
\end{array}\right)
\end{displaymath}
and if $\nu_1 \geqslant 0$, we have
$\Omega_{\|\vec{\nu}\|}=\Omega_{\vec{\nu}}$
 and we can choose for instance $\mathcal{R}=\mathbb{I}_{4}$.
 If $\nu_1<0$ , we have $\Omega_{\|\vec{\nu}\|}= -\Omega_{\vec{\nu}}$
 and we can choose
\begin{displaymath}
\mathcal{R}= \left(
\begin{array}{cccc}
1 & \phantom-0  & \phantom-0 & \phantom-0 \\
0 & -1 & \phantom-0 & \phantom-0 \\
0 & \phantom-0  & \phantom-1 & \phantom-0 \\
0 & \phantom-0  & \phantom-0 & -1
\end{array}\right) \! .
\end{displaymath}
This matrix evidently belongs to $SO(4)$.

2. Suppose $\nu_2^{2}+\nu_3^{2}\neq0$ and set
$\lambda=\sqrt{\nu_2^{2}+\nu_3^{2}}$. Define
\begin{eqnarray*}
\varepsilon_0 & :\,= & e_0 \hfill \\
\varepsilon_1 & :\,= & -{\frac{1}{\|\vec{\nu}\|}}\Omega_{\vec{\nu}}~
e_0
\end{eqnarray*}
hence
\begin{eqnarray*}
\Omega_{\vec{\nu}}~\varepsilon_0  & = & -\|\vec{\nu}\|\varepsilon_1 \hfill \\
\Omega_{\vec{\nu}}~\varepsilon_1  & = &
-{\frac{1}{\|\vec{\nu}\|}}\Omega^2_{\vec{\nu}}~e_0=\|\vec{\nu}\|\varepsilon_0
.
\end{eqnarray*}
We seek now $\varepsilon_2$ as a linear combination of $e_2$ and
$e_3$, $\varepsilon_2=ae_2+be_3$, with the orthonormality
conditions:
\[ \|\varepsilon_2\|^2=1,\quad \langle
\varepsilon_1,\varepsilon_2 \rangle_{\mathbb R^{4}}=0.
\]
We get thus $a=\pm{\frac{\nu_3}{\lambda}}$ and
$b=\mp{\frac{\nu_2}{\lambda}}$, we write $\varepsilon_2$ in the
form:
$\varepsilon_2=\frac{\nu_3}{\lambda}~e_2-\frac{\nu_2}{\lambda}~\beta
e_3$. The vector $\varepsilon_3$ is defined as \[
\varepsilon_3=-\frac{1}{\|\vec{\nu}\|}\Omega_{\vec{\nu}}~\varepsilon_2
,
\]
that is $\varepsilon_3=-\frac{\lambda}{\|\vec{\nu}\|} e_1 +
\frac{\nu_1 \nu_2}{\lambda \|\vec{\nu}\|} e_2+\frac{\nu_1
\nu_3}{\lambda \|\vec{\nu}\|} e_3$. We can easily verify that
$\Omega_{\vec{\nu}}~\varepsilon_2=-\|\vec{\nu}\|\varepsilon_3$,
$\Omega_{\vec{\nu}}~\varepsilon_3=\|\vec{\nu}\|\varepsilon_3$ and
that the matrix
\begin{displaymath}
\mathcal{R}= \left(
\begin{array}{cccc}
1 & 0 & 0 & 0 \\
0 &\frac{\nu_1}{\|\vec{\nu}\|} & 0 &-\frac{\lambda}{\|\vec{\nu}\|} \\
0 &\frac{\nu_2}{\|\vec{\nu}\|} & \frac{\nu_3}{\lambda} & \frac{\nu_1
\nu_2}
{\lambda \|\vec{\nu}\|} \\
0 & \frac{\nu_3}{\|\vec{\nu}\|} & -\frac{\nu_2}{\lambda}&
\frac{\nu_1 \nu_3}{\lambda\|\vec{\nu}\|}
\end{array}\right)
\end{displaymath}
is orthogonal with determinant 1.
\end{proof}
\begin{rem}
\textnormal{ One can easily show that the centralizer
 of $\Omega_{\|\vec{\nu}\|}$ in $ SO(4)$ is ${U(2)}$.
  For a fixed $\vec{\nu} \in \mathbb R^{3}\smallsetminus
\{0\}$,
 the set of orthonormal changes of basis $ \mathcal{B} =
\left\{ {\mathcal{R} \in SO\left( 4 \right) \, |~\mathcal{R}\,\Omega
\,_{\left\| {\,\vec \nu } \right\|} \,\mathcal{R}^{ - 1}  = \Omega
_{\vec \nu } } \right\}$ can be written $ \mathcal{B} =
\mathcal{R}_0  U\left( 2 \right)$ for some $\mathcal{R}_0 \in
\mathcal{B}$. The quotient space $SO(4) / U(2)$ is the so called
{space of orthogonal complex structures on} $\mathbb R^4$ and it is
diffeomorphic to the projective line $C\mathbb P^1$ (i.e. the sphere
$S^2$). }
\end{rem}
\begin{rem}
\textnormal{ The construction of $\mathcal{R}$ can be done in the
following way. Let $(e'_1, e'_2, e'_3)$ be the canonical basis of
the Euclidian oriented space $\mathbb{R}^3$ endowed with the vector
product $\wedge$. For $ \vec \nu  = \left( {\nu _1 ,\nu _2 ,\nu _3 }
\right) \in \mathbb{R}^3$ such that $ \nu _2^2  + \nu _3^2 \ne 0$
(i.e. $\frac{1} {{\left\| {\vec \nu } \right\|}}\vec \nu  \ne e'_1
$) we set $ \lambda  = \sqrt {\nu _2^2 + \nu _3^2 }$ and we define
\begin{eqnarray}
\label{eq4}
 \varepsilon '_1 & = &
\frac{1}{\|\vec \nu \|} \, \vec \nu \, , \\
\label{eq5}
 \varepsilon '_2 & = &
\frac{1}{\|\varepsilon'_1 \wedge e_1\|} \, \varepsilon'_1 \wedge e'_1 \, , \\
\label{eq6}
 \varepsilon '_3 & = &
\varepsilon '_1  \wedge \varepsilon '_2 \, .
\end{eqnarray}
The relating matrix between the basis $ \left( {\varepsilon '_1
,\varepsilon '_2 ,\varepsilon '_3 } \right)$ and $(e'_1,e'_2,e'_3)$
 is \[
\left( {\begin{array}{ccc}
   \hfill {\frac{{\nu _1 }}
{{\left\| {\vec \nu } \right\|}}} & \hfill 0 & \hfill { -
\frac{\lambda }
{{\left\| {\vec \nu } \right\|}}} \\
   \hfill {\frac{{\nu _2 }}
{{\left\| {\vec \nu } \right\|}}} & \hfill {\frac{{\nu _3 }}
{\lambda }} & \hfill {\frac{{\nu _1 \nu _2 }}
{{\lambda \left\| {\vec \nu } \right\|}}} \\
   \hfill {\frac{{\nu _3 }}
{{\left\| {\vec \nu } \right\|}}} & \hfill { - \frac{{\nu _2 }}
{\lambda }} & \hfill {\frac{{\nu _1 \nu _3 }}
{{\lambda \left\| {\vec \nu } \right\|}}} \\
 \end{array} } \right)
\]
so that $ \left( {\varepsilon '_1 ,\varepsilon '_2 ,\varepsilon '_3
} \right)$ is easily seen to be a direct orthonormal basis of
$\mathbb R^3$. We consider the embedding $\Phi  :  \mathbb{R}^3
\hookrightarrow \mathbb{R}^4$ given by $\Phi (x_1 e'_1  + x_2 e'_2 +
x_3 e'_3)=x_0 e_0  + x_1 e_1  + x_2 e_2  + x_3 e_3$. Thus $
\mathcal{R}\,_{\vec \nu } \, :\;= \left( {e_0 ,\Phi \left(
{\varepsilon '_1 } \right),\Phi \left( {\varepsilon '_2 }
\right),\Phi \left( {\varepsilon '_3 } \right)} \right) $ is a
direct orthonormal basis of $\mathbb{R}^4$ which represents an
element of the quotient space $SO(4)/U(2)$. Denote $\sigma$ the
coset of $U(2)$ in $SO(4)/U(2)$. When restricted to the punctured
sphere $S^2 \smallsetminus \{\varepsilon'_1\}$ of $\mathbb{R}^3$
i.e. to the set $\left\{ \vec \nu \in \mathbb{R}^3 | \left\| \vec
\nu \right\| = 1, \vec \nu \ne e'_1 \right\}$, the assignment \[
\begin{array}{*{20}c}
{\mathcal{R}:} & {S^2 \smallsetminus \left\{ {\varepsilon' _1 }
\right\}} & \longrightarrow & SO(4)/U(2) \smallsetminus \{ \sigma \} \\
 {} & {\vec \nu} & \longmapsto & \mathcal{R}_{\vec \nu}
 \end{array}
\]
is one-to-one and continuous, as we can see by the formulas
$(\ref{eq4})$, $(\ref{eq5})$ and $(\ref{eq6})$. Since
$\mathcal{R}_{\varepsilon' _1}$ can be identified with $\sigma$,
$\mathcal{R}$  extends to an homeomorphism $\mathcal{R}:S^2
\longrightarrow {{{SO\left( 4 \right)} \mathord{\left/
 {\vphantom {{SO\left( 4 \right)} {U\left( 2 \right)}}} \right.
 \kern-\nulldelimiterspace} {U\left( 2 \right)}}}$. In other words:
 each direction $\vec \nu \in S^2$ defines an orthogonal complex structure on
$\mathbb{R}^4$ to which will correspond a quaternionic Landau
operator $\mathcal{H}_{\vec \nu }.$}
\end{rem}
\subsection{canonical form of the Landau operator}
Recall that by a suitable orthogonal change of basis $\mathcal
R:(e_\alpha)_{\alpha=0}^3\rightarrow(e'_\alpha)_{\alpha=0}^3$ of
$\mathbb R^4$, we have
\[\mathcal R\Omega_{\vec\nu}\mathcal R^{-1}=\left(\begin{array}{cccc}
0&-\|\vec{\nu}\|&\phantom-0&\phantom-0\\
\|\vec{\nu}\|&\phantom-0&\phantom-0&\phantom-0\\
0&\phantom-0&\phantom-0&-\|\vec{\nu}\|\\
0&\phantom-0&\phantom-\|\vec{\nu}\|&\phantom-0
\end{array}
\right)=\Omega_{\|\vec{\nu}\|}=\|\vec{\nu}\|\textbf{i}\] where
$\textbf{i}$ is the element of the base of $\Im m \mathbb{H}$
represented by ($\ref {E i}$) and
\[\|\vec{\nu}\|=\nu_1^2+\nu_2^2+\nu_3^2.\]
 Let $(y_0,y_1,y_2,y_3)$ denote the
new coordinates in the basis $(e'_\alpha)_{\alpha=0}^3$. We consider
the connection defined by
\[A'=\left(\begin{array}{cccc}
0&-\|\vec{\nu}\|&\phantom-0&\phantom-0\\
\|\vec{\nu}\|&\phantom-0&\phantom-0&\phantom-0\\
0&\phantom-0&\phantom-0&-\|\vec{\nu}\|\\
0&\phantom-0&\phantom-\|\vec{\nu}\|&\phantom-0
\end{array}
\right)\left(
\begin{array}{c}y_0\\y_1\\y_2\\y_3 \end{array}
\right)\] that is
\[A'=\left( \begin{array}{c}-\|\vec{\nu}\|y_1\\\phantom- \|\vec{\nu}\|y_0\\-\|\vec{\nu}\|y_2\\\phantom- \|\vec{\nu}\|y_3 \end{array} \right)=
\left( \begin{array}{c}A'_0\\A'_1\\A'_2\\A'_3
\end{array} \right)\] and the covariant derivative is :
\[\nabla_\alpha^{A'}=\frac{\partial}{\partial y_\alpha}+iA'_\alpha(y).\]
The corresponding Hamilton operator is
\begin{eqnarray*}
\mathcal{H}_{\|\vec{\nu}\|}\!\!\!&=&\!\!\!
-\left(\!\frac{\partial}{\partial
y_0}-i\|\vec{\nu}\|y_1\!\right)^{\!2}
\!\!\!-\left(\!\frac{\partial}{\partial
y_1}+i\|\vec{\nu}\|y_0\!\right)^{\!2}
\!\!\!-\left(\!\frac{\partial}{\partial
y_2}-i\|\vec{\nu}\|y_3\!\right)^{\!2}
\!\!\!-\left(\!\frac{\partial}{\partial y_3}+i\|\vec{\nu}\|y_2\!\right)^{\!2}\\
&=&\!\!\!-\Delta_{(y_0,y_1,y_2,y_3)}^{\mathbb R^4}
-2i\|\vec{\nu}\|\left(y_0\frac{\partial}{\partial
y_1}-y_1\frac{\partial}{\partial y_0}
+y_2\frac{\partial}{\partial y_3}-y_3\frac{\partial}{\partial y_2}\right)\\
&&+\|\vec{\nu}\|^2\bigg(y_0^2+y_1^2+y_2^2+y_3^2\bigg).
\end{eqnarray*}
One can show from Section 4 that the Landau operator
$\mathcal{H}_{\|\vec{\nu}\|}$ may be defined as the partial Fourier
Transform of the sub-Laplacian  associated to the Heisenberg group
$\mathbb{R}\times\mathbb{R}^4$;  the Lie algebra $\mathfrak{h}'$ of
this group is generated by the following vector fields:
\begin{eqnarray*}
F'_0 = \frac{\partial}{\partial y_0}-y_1\frac{\partial}{\partial
t'}\;,\qquad
F'_1 = \frac{\partial}{\partial y_1}+y_0\frac{\partial}{\partial t'}\;,\\
F'_2 = \frac{\partial}{\partial y_2}-y_3\frac{\partial}{\partial
t'}\;,\qquad F'_3 = \frac{\partial}{\partial
y_3}+y_2\frac{\partial}{\partial t'}\phantom{\;,}
\end{eqnarray*}
and
\[T' = \frac{\partial}{\partial t'}\;.\]
 The quotient of $\mathfrak{h}$
by the ideal generated by $T_2$, $T_3$ is none other than the
Heisenberg algebra $\mathfrak{h}'$ of dimension $5$.

 The partial
Fourier transform with respect to the variable $t$ is $i\vartheta$,
$\vartheta$ is identified with $\|\vec{\nu}\|$.
\subsection{Complex form of $\mathcal{H}_{\|\vec{\nu}\|}$ }
We define the new complex coordinates on $\mathbb R^4$ by
$z_1=y_0+iy_1$ and $z_2=y_2+iy_3$.
We get,
\begin{eqnarray}
\label{H 35} \mathcal{H}_{\|\vec{\nu}\|}&=&-4\left\{
\frac{\partial}{\partial z_1}\frac{\partial}{\partial \bar z_1}
+\frac{\partial}{\partial z_2}\frac{\partial}{\partial \bar z_2}
+\frac{\|\vec{\nu}\|}{2}\bigg[\left(z_1\frac{\partial}{\partial z_1}
-\bar z_1\frac{\partial}{\partial\bar z_1}\right)\right.\nonumber \\
&+&\left.\left(z_2\frac{\partial}{\partial z_2} -\bar
z_2\frac{\partial}{\partial\bar z_2}\right)\bigg]
-\left(\frac{\|\vec{\nu}\|}{2}\right)^2\bigg(|z_1|^2+|z_2|^2\bigg)
\right\}.
\end{eqnarray}
The precedent expression $(\ref{H 35})$ is much simpler to study
than the expression $(\ref{E17})$.

From the second quantization formalism $\cite{B}$, we define the
annihilation operators in the following manner:
\[a_1=\frac{\partial}{\partial \bar z_1}+\frac{\|\vec{\nu}\|}{2} z_1,\quad
a_2=\frac{\partial}{\partial \bar z_2}+\frac{\|\vec{\nu}\|}{2} z_2\]
and the corresponding creation operators are
\[a^\dag_1=-\frac{\partial}{\partial  z_1}+\frac{\|\vec{\nu}\|}{2}\bar z_1,\quad
a^\dag_2=-\frac{\partial}{\partial  z_2}+\frac{\|\vec{\nu}\|}{2}
\bar z_2.\]
 They satisfy the following commutation relations:
\begin{eqnarray*}
[a_i , a_j] &=& [a^\dag_i , a^\dag_j ] = 0 \\ {}  [a_i , a^\dag_j]
&=& \|\vec{\nu}\| \delta _{i j}
\end{eqnarray*}
 A straightforward calculation gives
\begin{eqnarray*}
a^\dag_1a_1&=&-\frac{\partial}{\partial \bar
z_1}\frac{\partial}{\partial z_1}
+\frac{\|\vec{\nu}\|}{2}\left(z_1\frac{\partial}{\partial z_1} -\bar
z_1\frac{\partial}{\partial\bar z_1}\right)
+\left(\frac{\|\vec{\nu}\|}{2}\right)^2|z_1|^2-\frac{\|\vec{\nu}\|}{2}
\end{eqnarray*}
\begin{eqnarray*}
a^\dag_2a_2&=&-\frac{\partial}{\partial \bar
z_2}\frac{\partial}{\partial z_2}
+\frac{\|\vec{\nu}\|}{2}\left(z_2\frac{\partial}{\partial z_2} -\bar
z_2\frac{\partial}{\partial\bar z_2}\right)
+\left(\frac{\|\vec{\nu}\|}{2}\right)^2|z_2|^2-\frac{\|\vec{\nu}\|}{2}
\end{eqnarray*}
 We have,
\[\mathcal{H}_{\|\vec{\nu}\|} =4\left(a^\dag_1a_1+a^\dag_2a_2\right)+4\|\vec{\nu}\|\]
which is the Hamiltonian of two superposed uncoupled harmonic oscillators.  \\
The spectral analysis of this operator is  well known $\cite{GI}$
and references there in.
\section{Symmetry group of the Landau operator}
Here we examine a group theoretical aspect of the quantum system in
a magnetic field. It is known that the translation symmetry group
becomes noncommutative when a uniform magnetic field is introduced
into the Euclidean space.
\begin{thm}
Let $\mathcal T_a$ be the operator defined by
\[(\mathcal T_af)(x)=e^{i\langle A(a),\,x\rangle} f(x+a)\]
where $\langle~,~\rangle$ denotes the scalar product in $\mathbb
R^4$.
 Then
\[\mathcal T_a\circ\mathcal{H}_{\vec\nu}=
\mathcal{H}_{\vec\nu}\circ\mathcal T_a.\]
\end{thm}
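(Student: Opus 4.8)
The plan is to recognise $\mathcal{T}_a$ as the \emph{magnetic translation} attached to the gauge $A(x)=\Omega_{\vec\nu}x$ used throughout: the composition of the ordinary translation by $a$ with the gauge transformation that trivialises the difference between the connection $A$ and its translate. Write $\tau_a(x)=x+a$, let $T_a$ be the pull-back $(T_af)(x)=f(x+a)$, put $\nabla_\alpha:=\frac{\partial}{\partial x_\alpha}+iA_\alpha$ with $A_\alpha(x)=(\Omega_{\vec\nu}x)_\alpha$, so that $\mathcal{H}_{\vec\nu}=-\sum_{\alpha=0}^{3}\nabla_\alpha^{2}$ by $(\ref{E10})$, and read $A(a)$ in the statement as the vector $(A_0(a),\ldots,A_3(a))=\Omega_{\vec\nu}a$. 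Since $A$ is linear, $\tau_a^{*}A-A=d\chi_a$ is exact with $\chi_a(x)=\langle A(a),x\rangle$, which is exactly why the factor $g_a:=e^{i\chi_a}=e^{i\langle A(a),x\rangle}$ appears and $\mathcal{T}_a=M_{g_a}\circ T_a$ is the natural object. The theorem then reduces to the single intertwining identity
\begin{equation}
\label{E_intertwine}
\mathcal{T}_a\circ\nabla_\alpha=\nabla_\alpha\circ\mathcal{T}_a\qquad(\alpha=0,1,2,3),
\end{equation}
because once $(\ref{E_intertwine})$ is known, $\mathcal{T}_a\circ\mathcal{H}_{\vec\nu}=-\sum_\alpha\mathcal{T}_a\nabla_\alpha^{2}=-\sum_\alpha\nabla_\alpha\mathcal{T}_a\nabla_\alpha=-\sum_\alpha\nabla_\alpha^{2}\mathcal{T}_a=\mathcal{H}_{\vec\nu}\circ\mathcal{T}_a$.

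To prove $(\ref{E_intertwine})$ I would use only two elementary facts, both consequences of the linearity of $A$: (i) $\frac{\partial}{\partial x_\alpha}g_a=iA_\alpha(a)\,g_a$ (since $\chi_a(x)=\sum_\beta A_\beta(a)x_\beta$), and (ii) $A_\alpha(x+a)=A_\alpha(x)+A_\alpha(a)$. Evaluating both sides of $(\ref{E_intertwine})$ on a test function $f\in C^\infty(\mathbb{H},\mathbb{C})$: the left-hand side $\nabla_\alpha\bigl(g_a\cdot(f\circ\tau_a)\bigr)$ expands, using (i), into $iA_\alpha(a)g_a(f\circ\tau_a)+g_a((\partial_\alpha f)\circ\tau_a)+iA_\alpha g_a(f\circ\tau_a)$, while the right-hand side $g_a\cdot\bigl((\nabla_\alpha f)\circ\tau_a\bigr)$ expands, using (ii), into $g_a((\partial_\alpha f)\circ\tau_a)+ig_a\bigl(A_\alpha+A_\alpha(a)\bigr)(f\circ\tau_a)$; the two agree term by term. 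Equivalently, this can be packaged as a conjugation computation: $T_a\nabla_\alpha T_a^{-1}=\nabla_\alpha+iA_\alpha(a)$ by (ii) and translation-invariance of $\frac{\partial}{\partial x_\alpha}$, and $M_{g_a}\frac{\partial}{\partial x_\alpha}M_{g_a}^{-1}=\frac{\partial}{\partial x_\alpha}-iA_\alpha(a)$ by (i) while $M_{g_a}$ commutes with multiplication operators, so $M_{g_a}\bigl(\nabla_\alpha+iA_\alpha(a)\bigr)M_{g_a}^{-1}=\nabla_\alpha$; chaining the two gives $\mathcal{T}_a\nabla_\alpha\mathcal{T}_a^{-1}=\nabla_\alpha$. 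Together with the reduction above this proves the theorem.

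I do not expect any genuine obstacle: the computation is short and mechanical. The only delicate point is the \emph{normalisation of the phase} $g_a$ — the cancellation in $(\ref{E_intertwine})$ works precisely because the paper's gauge $A=\Omega_{\vec\nu}x$ (with $\Omega_{\vec\nu}$ antisymmetric) is the symmetric one, so that $\langle A(a),x\rangle$ contributes exactly the term $-iA_\alpha(a)$ needed to absorb the shift produced by translating the linear potential; with a different gauge, or an extra constant factor in the exponent, the two contributions would not match and $\mathcal{T}_a$ would have to be modified. Finally, the same two facts give $\mathcal{T}_a\mathcal{T}_b=e^{i\langle A(b),a\rangle}\mathcal{T}_{a+b}$, whence $\mathcal{T}_a\mathcal{T}_b\neq\mathcal{T}_b\mathcal{T}_a$ in general; this is the sense, announced at the start of Section~6, in which the translation symmetry group of $\mathcal{H}_{\vec\nu}$ becomes noncommutative, although it is not needed for the statement itself.
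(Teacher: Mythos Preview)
Your argument is correct and coincides with the alternative proof the paper gives in the Remark immediately following the theorem: one verifies the single intertwining $\nabla_\alpha\circ\mathcal{T}_a=\mathcal{T}_a\circ\nabla_\alpha$ for each $\alpha$, using only the linearity facts $A_\alpha(x+a)=A_\alpha(x)+A_\alpha(a)$ and $\partial_\alpha\langle A(a),x\rangle=A_\alpha(a)$, and then squares and sums. The paper's \emph{primary} proof takes the more laborious route of expanding $\mathcal{H}_{\vec\nu}$ into its three constituents $-\Delta$, $-2i\langle\Omega_{\vec\nu}x,\nabla\rangle$ and $\|\Omega_{\vec\nu}x\|^2$, applying each separately to $g(x)=e^{i\langle A(a),x\rangle}f(x+a)$, and regrouping the result into the same three pieces evaluated at $x+a$; your covariant-derivative reduction is shorter and more conceptual, but both arguments rest on the same two ingredients and neither contains an idea absent from the other. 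One cosmetic slip: when you verify the intertwining you call $\nabla_\alpha\bigl(g_a\cdot(f\circ\tau_a)\bigr)$ the ``left-hand side'', but by your own displayed equation that expression is $\nabla_\alpha\circ\mathcal{T}_a$, i.e.\ the right-hand side; the computation itself is unaffected.
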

\begin{proof}
We set, for every $x\in\mathbb R^4$, $g(x)=e^{i\langle
A(a),\,x\rangle }f(x+a)$. Thus for $\alpha=0,\ldots,3$, we have the
following expressions:
\begin{eqnarray*}
\frac{\partial}{\partial x_\alpha}g(x)&=&
\frac{\partial}{\partial x_\alpha}(e^{i\langle A(a),\,x\rangle }f(x+a))\\
&=&iA_\alpha(a)e^{i\langle A(a),\,x\rangle }f(x+a)+ e^{i\langle
A(a),\,x\rangle }\frac{\partial f}{\partial x_\alpha}(x+a)
\end{eqnarray*}
\begin{eqnarray*}
\frac{\partial^2}{\partial x_\alpha^2}g(x)&=&e^{i\langle
A(a),\,a\rangle }
\Big\{-A_\alpha^2(a)f(x+a)\\
&&+2iA_\alpha(a)\frac{\partial f}{\partial x_\alpha}(x+a)
+\frac{\partial^2 f}{\partial x_\alpha^2}(x+a)\Big\}
\end{eqnarray*}
\begin{eqnarray*}
\Delta^{\mathbb{R}^4}g(x)&=&e^{i\langle A(a),\,a\rangle }
\Big\{-\sum\limits_{\alpha=0}^3A_\alpha^2(a)f(x+a)\\
&&+2i\sum\limits_{\alpha=0}^3A_\alpha(a)\frac{\partial f}{\partial
x_\alpha}(x+a) +\Delta^{\mathbb{R}^4}f(x+a)\Big\}
\end{eqnarray*}
\begin{eqnarray*}
2i\sum\limits_{\alpha=0}^3A_\alpha(x)\frac{\partial}{\partial
x_\alpha}g(x)&=& 2ie^{i\langle A(a),\,x\rangle }
\sum\limits_{\alpha=0}^3A_\alpha(x)\Big\{iA_\alpha(a)f(x+a)\\
&&+\frac{\partial f}{\partial x_\alpha}(x+a)\Big\}\\
&=&e^{i\langle A(a),\,x\rangle }
\Big\{-2\sum\limits_{\alpha=0}^3A_\alpha(x)A_\alpha(a)f(x+a)\\
&&+2i\sum\limits_{\alpha=0}^3A_\alpha(x)\frac{\partial f}{\partial
x_\alpha}(x+a)\Big\}
\end{eqnarray*}
\begin{eqnarray*}
-\|\vec\nu\|_{\mathbb{R}^3}^2\|x\|_{\mathbb{R}^4}^2g(x)&=&
-\|\vec\nu\|_{\mathbb{R}^3}^2\|x\|_{\mathbb{R}^4}^2
e^{i\langle A(a),\,x\rangle }f(x+a)\\
&=&-\Big(\sum\limits_{\alpha=0}^3A_\alpha^2(x)\Big) e^{i\langle
A(a),\,x\rangle }f(x+a)
\end{eqnarray*}
so that
\begin{eqnarray*}
(\mathcal{H}_{\vec\nu}\circ\mathcal T_a)(f)(x)&=&(\mathcal{H}_{\vec\nu}g)(x)\\
&=&e^{i\langle A(a),\,x\rangle } \Big\{\Delta^{\mathbb{R}^4}
+2i\sum\limits_{\alpha=0}^3A_\alpha(x+a)\frac{\partial}{\partial x_\alpha}\\
&&-\sum\limits_{\alpha=0}^3(A_\alpha^2(a)+2A_\alpha(x)A_\alpha(a)
+A_\alpha^2(x))\Big\}f(x+a).
\end{eqnarray*}
Since
\begin{eqnarray*}
\sum\limits_{\alpha=0}^3(A_\alpha^2(a)+2A_\alpha(x)A_\alpha(a)+A_\alpha^2(x))&=&
\sum\limits_{\alpha=0}^3A_\alpha^2(x+a)\\
&=&\|\vec\nu\|_{\mathbb{R}^3}^2\|x+a\|_{\mathbb{R}^4}^2,
\end{eqnarray*}
we get finally
\begin{eqnarray*}
(\mathcal{H}_{\vec\nu}\circ\mathcal T_a)(f)(x)&=& e^{i\langle
A(a),\,x\rangle } \Big\{\Delta^{\mathbb{R}^4}
+2i\sum\limits_{\alpha=0}^3A_\alpha(x+a)\frac{\partial}{\partial x_\alpha}\\
&&-\|\vec\nu\|_{\mathbb{R}^3}^2\|x+a\|_{\mathbb{R}^4}^2\Big\}f(x+a)\\
&=&(\mathcal T_a\circ\mathcal{H}_{\vec\nu})(f)(x).
\end{eqnarray*}
\end{proof}
\paragraph{Remark.}
For $\alpha=0,\ldots,3$, we have
\begin{eqnarray*}
\nabla_\alpha(e^{i\langle A(a),\,x\rangle }f(x+a))&=&
\Big(\frac{\partial}{\partial x_\alpha}+iA_\alpha(x)\Big)
(e^{i\langle A(a),\,x\rangle }f(x+a))\\
&=&e^{i\langle A(a),\,x\rangle }
\Big\{\frac{\partial f}{\partial x_\alpha}(x+a)+iA_\alpha(a)f(x+a)\\
&&+iA_\alpha(x)f(x+a)\Big\}\\
&=&e^{i\langle A(a),\,x\rangle } \Big\{\frac{\partial f}{\partial
x_\alpha}(x+a)+iA_\alpha(x+a)f(x+a)\Big\}\\
&=&e^{i\langle A(a),\,x\rangle } \Big\{\frac{\partial}{\partial
x_\alpha}+iA_\alpha(x+a)\Big\}f(x+a)
\end{eqnarray*}
that is
\[\nabla_\alpha\circ\mathcal T_a(f)=\mathcal T_a\circ\nabla_\alpha(f).\]
Since
$\mathcal{H}_{\vec\nu}=\sum\limits_{\alpha=0}^3\nabla_\alpha^2$, we
obtain immediately
\[\mathcal{H}_{\vec\nu}\circ\mathcal T_a=
\mathcal T_a\circ\mathcal{H}_{\vec\nu}.\] It is to be noted that
$\mathcal T_a$ is a combination of a translation in the
$a$-direction and a gauge transformation. We have unitary
transformations, the set of these transformations  form a
noncommutative group, which is the magnetic translation group; the
law of the group is:
\begin{equation*}
\mathcal T_b \mathcal T_af(x) = e^{i\langle A(a),\,b\rangle
}e^{i\langle A(b+a),\,x\rangle }f(x+b+a)
\end{equation*}
where $a\in\mathbb R^4$ and $b\in\mathbb R^4$ are fixed vectors. It
is  also to be noted that the translation in the $a$-direction and
the  one in the $b$ direction do not commute but satisfy
 \begin{eqnarray*}
\mathcal T_b^{-1} \mathcal T_a^{-1}\mathcal T_b \mathcal
T_a=\displaystyle{e^{2i\langle A(a),\,b\rangle}}.
\end{eqnarray*}
The precedent relation means that the translation symmetry group
becomes noncommutative when a uniform magnetic field is introduced
into the Euclidean space.
\section{Concluding remarks and comments}
In the preceding sections, we have derived the Landau operator from
a Lie group theoretical approach. We have shown that there exists a
canonical form of this operator which is easier to handle. An
interesting investigation involving theta functions, which describes
a charged particle moving in a uniform magnetic field on a lattice
of $\mathbb R^4$, was done in $\cite{GIHZZ}$.
Other connections with this work via the magnetic translation group
can be found in $\cite{T}$ and $\cite{B}$ and references therein.
Tanimura $\cite{T}$ studied the magnetic translation groups in a
$n$-dimensional torus and their representations. Brown $\cite{B}$
found that the translation symmetry if an electron in a lattice in a
uniform magnetic filed is noncommutative and that the quantum system
obeys a projective representation of the translation group.

\end{document}